\newtheorem{theorem}{Theorem}
\newtheorem{lemma}{Lemma}
\def \F {{\cal F}}
\def \X {{\cal X}}
\newcommand{\dff}{\stackrel{\scriptscriptstyle\triangle}{=}}
\newcommand{\bbe}{\mathbb{E}}
\def \P{\mathbb{P}}
\def \S{{\cal S}}
\def \G{{\cal G}}
\def \L{{\cal L}}
\begin{document}

\title{Quickest Search for a Change Point}

\author{
	\IEEEauthorblockN{Javad Heydari and Ali Tajer}
	\IEEEauthorblockA{Electrical, Computer, and Systems Engineering Department\\
	Rensselaer Polytechnic Institute
	}
}

\maketitle

\begin{abstract}
This paper considers a sequence of  random variables generated according to a {\em common} distribution. The distribution might undergo periods of transient changes at an unknown set of time instants, referred to as change-points. The objective is to sequentially collect measurements from the sequence and design a dynamic decision rule for the quickest {\em identification} of one change-point in real time, while, in parallel, the rate of false alarms is controlled. This setting is different from the conventional change-point detection settings in which there exists at most one change-point that can be either persistent or transient. The problem is considered under the minimax setting with a constraint on the false alarm rate before the first change occurs. It is proved that the Shewhart test achieves {\em exact} optimality under worst-case change points and also worst-case data realization. Numerical evaluations are also provided to assess the performance of the decision rule characterized.
\end{abstract}


\section{Introduction}


Real-time monitoring of a system or process for identifying a change of behavior arises in many application domains such as detecting faults or security breaches in networks, and searching for under-utilized spectrum bands for opportunistic spectrum access. It is often of interest to detect abrupt changes with minimal delay after they occur. At the same time designing detection rules that are too sensitive to changes in observations are susceptible to raising frequent false alarms. This creates an inherent tension between the quickness and the quality of the decisions. 

In classical quickest change-point detection, the process under consideration is a sequence of random variables, distribution of which changes at an unknown time instant permanently~\cite{PoorQuick}. A decision maker aims to design a stopping rule to detect such a change with the minimal average delay by monitoring it sequentially, while, in parallel, controlling the rate of false alarms. The setting and objective of this paper has major distinctions from the classical quickest change-point detection. First the change is not persistent, i.e., the distribution of the sequence returns to the pre-change distribution after the change. Secondly, multiple changes occur throughout the monitoring process. Furthermore, the goal of this paper is to search for one of the change-points and detect it immediately after it occurs, while in the classical setting the objective is to minimize the average decision delay. The drawback of minimizing the average decision delay is that it allows for arbitrary large delay~\cite{PoorQuick}. Therefore, in this paper, similar to~\cite{Bojdecki,Pollak-Krieger,Moustakides:2014,Moustakides:2015}, a probability maximizing approach is adopted. In this approach the objective is to design a stopping rule that maximizes the probability of stopping at a change-point.

Quickest detection of transient changes in a sequence has gained research interest in recent years. In~\cite{Ebi,Nikiforov}, the problem of detecting one transient change is considered. The study in~\cite{Ebi} aims to characterize the shortest duration of a change that can be detected as the false alarm rate goes to zero, while~\cite{Nikiforov} treats a detection when the transient change is over as a missed detection and aims to minimize it. The studies in~\cite{GM-VV,GF-VV,VKM,VKM2} consider a setting in which the change does not occur abruptly, but rather through a series of changes, after which it settles to a permanent steady state. In this setting, the steady-state distribution is different from the pre-change one, while the pre-change and post-change models are identical in~\cite{Ebi,Nikiforov}. In~\cite{GM-VV} the transient duration is a single measurement, while in~\cite{GF-VV} it is a deterministic unknown constant. The data model of this paper is similar to that of~\cite{Ebi,Nikiforov} in the sense that the pre-change and post-change distributions are the same. However, in this paper the sequence experiences {\em multiple} transient changes. Quickest change-point detection under multiple transient changes is also considered in~\cite{VKM,VKM2}, where the state of the system is assumed to be a Markov process and only one of the states, which is also an absorbing state, is considered as the desirable change state. 
The oscillatory behavior of the sequence under consideration in this paper between two distributions (pre-change and post-change) is its fundamental difference with the aforementioned studies.

Besides the distinction in the data model, the ultimate goal of this paper differs from the classical settings. Instead of minimizing the average detection delay, a probability maximization approach is adopted in order to maximize the probability of stopping at a change-point. This approach was first proposed in~\cite{Bojdecki} in a Bayesian setting for detecting a persistent change in a sequence of independent and identically distributed (i.i.d.) random variables. The results were extended to dependent random variables~\cite{Sarnowski}, and composite post-change model~\cite{Pollak-Krieger}. In~\cite{Moustakides:2014,Moustakides:2015}, the objective is detecting a persistent change immediately by using the first measurement under the change state. Under both Bayesian and minimax regimes the exactly optimal detection rules have been characterized, and the results have been extended to independent non-identically distributed measurements, composite post-change models~\cite{Moustakides:2014}. The extension to Markovian measurements is studied in~\cite{Moustakides:2015}.

The remainder of the paper is organized as follows. Section~\ref{sec:model} provides the data model  and formalizes the search problem of interest. The quickest search rule is characterized in Section~\ref{sec:sol}, where the performance bounds and the associated stopping rule that achieves this bound are specified. Section~\ref{sec:sim} provides the numerical evaluation of the quickest search approach, and concluding remarks are provided in Section~\ref{sec:conclusion}.

\section{Model and Formulation}
\label{sec:model}

\subsection{Data Model} \label{sec:data_model}

Consider a sequence of random variables denoted by $\mathcal{X}\dff\{X_t:t\in\mathbb{N}\}$.
As shown in Fig.~\ref{fig:model}, these random variables have a common probability distribution that undergoes periods of {\em transient} changes at an unknown and non-random set of time instants. Specifically, the elements of $\X$ are {\em nominally} generated according to a probabilistic distribution with the cumulative density function (cdf) $F_0$. However, there potentially exist a finite but unknown number of time instants $\gamma\dff\{\gamma_i:i\in\{1,\dots,s\}\}$, referred to as change-points, at which the distribution changes from the nominal cdf $F_0$ to a distinct one with cdf $F_1$. It is assumed that $s\in\mathbb{N}$ is unknown, and the duration of each transient change is a known constant denoted by $T$, and the transient intervals are assumed to be non-overlapping, i.e., $|\gamma_i-\gamma_j|>T$ for all $i,j\in\{1,\dots,s\}$. We define $\S$ as the set of all instants $t\in\mathbb{N}$ at which $X_t$ is generated by $F_1$, i.e., 
\begin{align}
\S\dff \{t\;:\; X_t\sim F_1\}\ .
\end{align}
Hence, for the elements of $\X$ we have the dichotomous model
\begin{equation}\label{eq:H}
    \begin{array}{cl}
      X_t\sim F_0\;, & t\in\mathbb{N}\backslash \S \\
      X_t\sim F_1\;, & t\in\S\ 
    \end{array}\ .
\end{equation}
We also assume that there exist well-defined probability density functions (pdfs) corresponding to $F_0$ and $F_1$, which we denote by $f_0$ and $f_1$, respectively. Subsequently, we denote the probability measure governing sequence $\X$ and the expectation with respect to this measure by $\P_\gamma$ and $\bbe_\gamma$, respectively. We also use $\P_\infty$ and $\bbe_\infty$ for the case that no change occurs in the data under consideration, i.e., $s=0$, and the distribution is always $F_0$.

\subsection{Problem Formulation}

The objective is to sequentially collect measurements from the sequence $\X$ and design a sequential decision rule for the quickest {\em identification} of one change-point, i.e., one of the elements in $\gamma=\{\gamma_i:i\in\{1,\dots,s\}\}$ in real time, while, in parallel, the rate of false alarms is controlled. Hence, the sequential decision-making process continually collects measurements until the stopping time of the process, at which point it is confident enough that the last collected sample belongs to the set $\gamma$. It is noteworthy that the setting in which there exists only one change-point, which can be either persistent or transient, is studied extensively in the literature (c.f.~\cite{PoorQuick}--\cite{GM-VV}). In contrast, in this paper we assume that the number of change-points and the ensuing transient intervals can exceed one.  

The information generated by the data sequentially up to time $t$ generates the filtration $\{\F_t\,:\,t\in\mathbb{N}\}$ where
\begin{align}
\F_t\dff\sigma\big(X_1,X_2,\dots,X_t\big)\ .
\end{align}
Furthermore, we also define a coarser filtration, which at time $t\in\mathbb{N}$ is generated by only the measurements from the last change-point up to time $t$. This filtration is denoted by
\begin{align}
\G_t\dff\sigma\big(X_{r(t)+1},X_{r(t)+2},\dots,X_t\big)\ ,
\end{align}
where we have defined $r(t)\dff\sup\ \{i\in\S:i\leq t\}$, and adopt the convention that the supremum of an empty set is zero.
The sequential sampling process continues until the stopping time, denoted by $\tau$, after which no further measurements are made and a change is declared. The stopping time $\tau$ is set to be a $\G_t$-measurable function.

\begin{figure}[t!]
\centering
\includegraphics[width=0.45\textwidth]{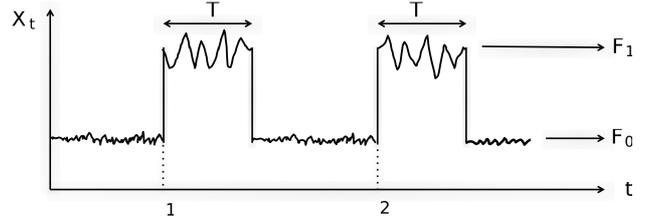}
\caption{Data model.}
\label{fig:model}
\end{figure}

 Two relevant performance measures for evaluating the quality of these sampling and decision-making processes are the {\em agility} of the process as well as the frequency of false alarms. To account for the agility, since we are interested in the real-time identification of a change-point, the conventional average detection delay is ineffective as it does not impose a hard threshold on the detection delay. To circumvent this, for quantifying the agility of the process we adopt a probability-based approach
as also done by~\cite{Bojdecki}~and~\cite{Moustakides:2014}. Specifically, we investigate two minimax settings in which we formalize probability maximization criteria mimicking Pollak's~\cite{Pollak} and Lorden's~\cite{Lorden} approach. In particular, we define a Pollak-like criterion as
\begin{align}\label{eq:Pollak_criterion}
\mathcal{L}_{\rm P}(\tau)\dff \inf_{\gamma}\ \sum_{\gamma_i\in \gamma}\ \mathbb{P}_\gamma(\tau=\gamma_i\;|\;\tau\geq \gamma_i)\ ,
\end{align}
where the infimum is over all possible realizations of the unknown set $\gamma$. Similarly, we define a Lorden-like worst case criterion as
\begin{align}\label{eq:Lorden_criterion}
\mathcal{L}_{\rm L}(\tau)\dff \inf_{\gamma}\ \sum_{\gamma_i\in \gamma}\ \underset{\mathcal{F}_{\gamma_i-1}}{\text{essinf}}\ \mathbb{P}_\gamma(\tau=\gamma_i\;|\;\mathcal{F}_{\gamma_i-1},\tau\geq \gamma_i)\ .
\end{align}
It can be readily verified that
\begin{align}
\mathcal{L}_{\rm L}(\tau)\leq\mathcal{L}_{\rm P}(\tau)\ .
\end{align}
In order to account for the frequency of the false alarms, we use $\bbe_\infty\{\tau\}$, which captures the average run length to a false alarm before the first change-point $\gamma_1$ occurs.

There exists an inherent tension between the rate of false alarms on the one hand, and the measures $\mathcal{L}_{\rm P}(\tau)$ and $\mathcal{L}_{\rm L}(\tau)$, on the other hand as improving these two measures penalizes the false alarm rate. Leveraging such tension, an optimal sampling strategy can be obtained by balancing false alarm rate and the detection probability. Hence, under the Pollak-like criterion in~\eqref{eq:Pollak_criterion} the sampling strategy is the solution to
\begin{align}\label{eq:p1}
\begin{array}{ll}
\sup_{\tau} & \mathcal{L}_{\rm P}(\tau)\\
\text{s.t.} & \bbe_\infty\{\tau\}\geq\eta
\end{array}\ ,
\end{align}
and under the Lorden-like criterion in~\eqref{eq:Lorden_criterion} it is the solution to
\begin{align}\label{eq:p2}
\begin{array}{ll}
\sup_{\tau} & \mathcal{L}_{\rm L}(\tau)\\
\text{s.t.} & \bbe_\infty\{\tau\}\geq\eta
\end{array}\ ,
\end{align}
where $\eta\geq 1$ in both settings controls the false alarm rate.

\section{Quickest Search Rules}
\label{sec:sol}

 In this paper we characterize the optimal stopping rules for the problems in~\eqref{eq:p1} and \eqref{eq:p2}. For this purpose, we first find upper bounds on the objective functions $\mathcal{L}_{\rm P}(\tau)$ and $\mathcal{L}_{\rm L}(\tau)$ in Section~\ref{sec:bound}. Then we briefly review the Shewhart test in Section~\ref{sec:Shewhart}, and in Section~\ref{sec:optimal} we show that by using the Shewhart test as the decision rule the values of  $\mathcal{L}_{\rm P}(\tau)$ and $\mathcal{L}_{\rm L}(\tau)$ achieve their upper bounds established in Section~\ref{sec:bound}, thereby establishing that the Shewhart test is an optimal solution to~\eqref{eq:p1} and \eqref{eq:p2}

\subsection{Upper Bounds on the Objective Functions}
\label{sec:bound}

In order to facilitate finding upper bounds on the objective functions in~\eqref{eq:Pollak_criterion} and~\eqref{eq:Lorden_criterion}, we denote  the likelihood ratio of the measurement made at time $t$ by
\begin{align}\label{eq:LLR}
\ell_t\dff\frac{f_1(X_t)}{f_0(X_t)}\ .
\end{align}
The following theorem characterizes an upper bound for both  Lorden-like and modified Pollak-like criteria defined in \eqref{eq:Pollak_criterion}~and~\eqref{eq:Lorden_criterion}, respectively.
\begin{theorem}[Upper Bound]\label{th:bound}
For the objective functions $\mathcal{L}_{\rm P}(\tau)$ and $\mathcal{L}_{\rm L}(\tau)$ we have
\begin{align}\label{eq:up}
\mathcal{L}_{\rm L}(\tau) &\leq s\cdot\frac{\bbe_\infty\{\ell_\tau\}}{\bbe_\infty\{\tau\}} \ ,\\
\text{and}\quad\mathcal{L}_{\rm P}(\tau) &\leq s\cdot\frac{\bbe_\infty\{\ell_\tau\}}{\bbe_\infty\{\tau\}}\ .
\end{align}
\end{theorem}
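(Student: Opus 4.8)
The plan is to establish the bound for $\mathcal{L}_{\rm P}(\tau)$ alone: since the excerpt already records $\mathcal{L}_{\rm L}(\tau)\le\mathcal{L}_{\rm P}(\tau)$, the inequality for $\mathcal{L}_{\rm L}(\tau)$ follows at once, and since $s\ge 1$ and the right-hand side is nonnegative it in fact suffices to prove the sharper statement $\mathcal{L}_{\rm P}(\tau)\le\bbe_\infty\{\ell_\tau\}/\bbe_\infty\{\tau\}$. The first move is to observe that the infimum defining $\mathcal{L}_{\rm P}(\tau)$ ranges over \emph{all} realizations of $\gamma$, hence it is no larger than the infimum taken only over configurations containing a single change point. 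Writing $\PP_g$ (and $\bbe_g$) for the law under which a lone transient begins at time $g$, this reduces the goal to
\[
\mathcal{L}_{\rm P}(\tau)\;\le\;\inf_{g\ge 1}\ \PP_g\big(\tau=g\med\tau\ge g\big)\ .
\]

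The heart of the argument is a change-of-measure identity for each fixed $g$. Since $\G_t\subseteq\F_t$, the rule $\tau$ is in particular a stopping time of $\{\F_t\}$, so $\{\tau\ge g\}=\{\tau\le g-1\}^c\in\F_{g-1}$; on $\F_{g-1}$ the laws $\PP_g$ and $\PP_\infty$ coincide because no change has yet occurred, whence $\PP_g(\tau\ge g)=\PP_\infty(\tau\ge g)$. Likewise $\{\tau=g\}\in\F_g$, and on $\F_g$ one has $\mathrm{d}\PP_g/\mathrm{d}\PP_\infty=\ell_g$; the fact that the transient also affects $X_{g+1},\dots,X_{g+T-1}$ is immaterial, since $\{\tau=g\}$ depends only on $X_1,\dots,X_g$ and on those coordinates no product of later likelihood ratios enters. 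Therefore $\PP_g(\tau=g)=\bbe_\infty\{\ell_g\,\mathds{1}\{\tau=g\}\}=\bbe_\infty\{\ell_\tau\,\mathds{1}\{\tau=g\}\}$, and combining the two statements,
\[
\PP_g\big(\tau=g\med\tau\ge g\big)=\frac{\bbe_\infty\{\ell_\tau\,\mathds{1}\{\tau=g\}\}}{\PP_\infty(\tau\ge g)}\ .
\]

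To finish I would aggregate these ratios by a mediant (averaging) inequality: for the nonnegative numerators $a_g\dff\bbe_\infty\{\ell_\tau\,\mathds{1}\{\tau=g\}\}$ and denominators $b_g\dff\PP_\infty(\tau\ge g)$, one has $\inf_g(a_g/b_g)\le(\sum_g a_g)/(\sum_g b_g)$, the infimum restricted to indices with $b_g>0$; and here $\sum_{g\ge1}a_g=\bbe_\infty\{\ell_\tau\}$ while $\sum_{g\ge1}b_g=\bbe_\infty\{\tau\}$. This yields $\mathcal{L}_{\rm P}(\tau)\le\bbe_\infty\{\ell_\tau\}/\bbe_\infty\{\tau\}\le s\cdot\bbe_\infty\{\ell_\tau\}/\bbe_\infty\{\tau\}$, and then $\mathcal{L}_{\rm L}(\tau)\le\mathcal{L}_{\rm P}(\tau)$ gives the remaining inequality. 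The degenerate cases ($\tau$ bounded under $\PP_\infty$, or $\bbe_\infty\{\tau\}=\infty$) are dealt with by truncating the index $g$ at a finite horizon $N$, invoking the finite mediant inequality, and letting $N\to\infty$.

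I expect the delicate step to be the change-of-measure bookkeeping in the second paragraph: keeping straight that $\{\tau\ge g\}$ is $\F_{g-1}$-measurable while $\{\tau=g\}$ is $\F_g$-measurable, so that the correct density is the single-letter ratio $\ell_g$ rather than a product over the whole transient window, and verifying that the mediant manipulation is legitimate (nonnegativity of $a_g$, positivity of $b_g$ where it is used, and identification of the two series with $\bbe_\infty\{\ell_\tau\}$ and $\bbe_\infty\{\tau\}$). Relaxing the worst-case configuration to a single change point, and absorbing the harmless factor $s\ge1$ only at the very end, are the other points to spell out; an alternative that retains the full infimum over $\gamma$ and instead exploits the renewal structure of $\{\G_t\}$ across transients is possible but less economical.
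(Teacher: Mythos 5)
Your single--change-point computation is, modulo notation, exactly the engine of the paper's own proof: the identity $\P_g(\tau=g\med\tau\ge g)=\bbe_\infty\{\ell_\tau\mathds{1}_{\{\tau=g\}}\}/\P_\infty(\tau\ge g)$ is the chain \eqref{eq:mul0}--\eqref{eq:mul6}, and your mediant aggregation over $g$ is precisely the paper's ``sum both sides over $\gamma_1$'' step, with the same identifications $\sum_g\P_\infty(\tau\ge g)=\bbe_\infty\{\tau\}$ and $\sum_g\bbe_\infty\{\ell_\tau\mathds{1}_{\{\tau=g\}}\}=\bbe_\infty\{\ell_\tau\}$. Working with $\mathcal{L}_{\rm P}$ and importing $\mathcal{L}_{\rm L}\le\mathcal{L}_{\rm P}$, instead of the paper's direct treatment of the essential infimum in $\mathcal{L}_{\rm L}$, is a legitimate and slightly cleaner packaging. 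So for $s=1$ your argument and the paper's coincide.

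The gap is in how you dispose of $s>1$. You upper-bound the infimum by restricting it to configurations with a single change point, which presupposes that such configurations belong to the feasible set of the infimum. The theorem's bound carries an explicit factor of $s$, and the paper's proof treats the infimum as ranging over the positions $\gamma_1<\dots<\gamma_s$ of a configuration with exactly $s$ change points (see the displayed $\inf_{\gamma_1<\dots<\gamma_s}$ in \eqref{l1}--\eqref{l4}); under that reading a one-point configuration is not admissible and your first move is unavailable. One must instead bound each of the $s$ summands $\P_\gamma(\tau=\gamma_i\med\tau\ge\gamma_i)$ separately and add, which is where the factor $s$ comes from. For $i\ge 2$ this is where the real work lies, and it is exactly the point your shortcut skips: the density of $\P_\gamma$ with respect to $\P_\infty$ on $\F_{\gamma_i}$ is the product $\prod_{t\in\S,\,t\le\gamma_i}\ell_t$ over \emph{all} earlier transient windows, not the single factor $\ell_{\gamma_i}$, so your remark that the post-change samples beyond the stopping instant are ``immaterial'' is true only for the first change point. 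Collapsing that product back to $\ell_{\gamma_i}$ is the one nontrivial step of the multi-point case, and the paper needs the standing assumption that $\tau$ is $\G_t$-measurable to decouple $\mathds{1}_{\{\tau=\gamma_i\}}$ from the earlier windows and integrate each earlier $\ell_t$ to one. If you instead intend the reading in which the infimum ranges over all $\gamma$ of every cardinality, your argument does go through and in fact yields the stronger bound without the factor $s$ --- but then you should say so explicitly, because as written the reduction is asserted in one sentence and is the step of the proof most in need of justification.
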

\begin{proof}
We provide the proof for the  Loreden-like criterion $\mathcal{L}_{\rm L}(\tau)$. The proof for the Pollak-like criterion follows the same line of arguments, and is omitted for brevity. 

We start by considering that case that we only have one change-point, i.e., $s=1$. From the definition in~\eqref{eq:Lorden_criterion} we have
\begin{align}\label{eq:bound1}
\mathcal{L}_{\rm L}(\tau) \leq \mathbb{P}_\gamma(\tau=\gamma_1\;|\;\mathcal{F}_{\gamma_1-1},\tau\geq \gamma_1)\ .
\end{align}
By multiplying both sides of \eqref{eq:bound1} by $\mathds{1}_{\{\tau\geq\gamma_1\}}$, and taking the expectation with respect to the nominal measure $\P_\infty$ we obtain
\begin{align}
\label{eq:mul0}
&\bbe_\infty\{\mathds{1}_{\{\tau\geq\gamma_1\}}\mathcal{L}_{\rm L}(\tau)\} \\
\label{eq:mul1}
&\leq \bbe_\infty\{\mathds{1}_{\{\tau\geq\gamma_1\}}\mathbb{P}_\gamma(\tau=\gamma_1\;|\;\mathcal{F}_{\gamma_1-1},\tau\geq \gamma_1)\}\\
\label{eq:mul2} 
& = \bbe_\infty\{\mathbb{E}_\gamma\{\mathds{1}_{\{\tau=\gamma_1\}}\;|\;\mathcal{F}_{\gamma_1-1},\tau\geq \gamma_1\}\}\\
\label{eq:mul3} 
&= \bbe_\infty\Big\{\mathbb{E}_\infty\Big\{\mathds{1}_{\{\tau=\gamma_1\}}\prod_{t\in\S,t\leq\tau}\ell_t\;|\;\mathcal{F}_{\gamma_1-1},\tau\geq \gamma_1\Big\}\Big\}\\
\label{eq:mul4} 
&= \bbe_\infty\Big\{\mathds{1}_{\{\tau=\gamma_1\}}\prod_{t\in\S,t\leq\tau}\ell_t\Big\}\\
\label{eq:mul5} 
& = \mathbb{E}_\infty\{\mathds{1}_{\{\tau=\gamma_1\}}\ell_{\gamma_1}\}\ ,
\end{align}
where~\eqref{eq:mul1} is due to the definition of $\mathcal{L}_{\rm L}(\tau)$,~\eqref{eq:mul2} holds since $\mathds{1}_{\{\tau\geq\gamma_1\}}$ is measurable with respect to $\mathcal{F}_{\gamma_1-1}$ and  the event $\{\tau=\gamma_1\}$ is a subset of $\{\tau\geq\gamma_1\}$,~\eqref{eq:mul3} results from changing the expectation measure,~\eqref{eq:mul4} is due to the towering property of expectation, and~\eqref{eq:mul5} holds since $\tau$ is $\G_t$-measurable. On the other hand, since $\mathcal{L}_{\rm L}(\tau)$ is deterministic, the term in~\eqref{eq:mul0} can be expanded to
\begin{align}
\bbe_\infty\{\mathds{1}_{\{\tau\geq\gamma_1\}}\mathcal{L}_{\rm L}(\tau)\} 
&=\mathbb{P}_\infty\{\tau\geq\gamma_1\}\mathcal{L}_{\rm L}(\tau)\ ,
\end{align}
which in conjunction with~\eqref{eq:mul0} and~\eqref{eq:mul5} establishes that
\begin{align}\label{eq:mul6}
\mathbb{P}_\infty\{\tau\geq\gamma_1\}\mathcal{L}_{\rm L}(\tau) \leq \mathbb{E}_\infty\{\ell_{\gamma_1}\mathds{1}_{\{\tau=\gamma_1\}}\}\ .
\end{align}
Summing  both sides of \eqref{eq:mul6} over all $\gamma_1\in\mathbb{N}\cup\{0\}$ yields
\begin{align}
\mathcal{L}_{\rm L}(\tau)\mathbb{E}_\infty\{\tau\}
&\leq \mathbb{E}_\infty\{\ell_\tau\}\ ,
\end{align}
which concludes the desired result for the case of $s=1$. For any $s>1$ we have
\begin{align}
&\mathcal{L}_{\rm L}(\tau) \\
\label{l1}
&\leq \inf_{\gamma_s}\inf_{\gamma_1<\dots<\gamma_s} \sum_{i=1}^s\mathbb{P}_\gamma(\tau=\gamma_i\;|\;\mathcal{F}_{\gamma_i-1},\tau\geq \gamma_i)\\ 
\label{l2}
&\leq \frac{\mathbb{E}_\infty\left\{\ell_{\tau}\right\}}{\mathbb{E}_\infty\{\tau\}}
+\inf_{\gamma_1<\dots<\gamma_s} \sum_{i=1}^{s-1}\mathbb{P}_\gamma(\tau=\gamma_i\;|\;\mathcal{F}_{\gamma_i-1},\tau\geq \gamma_i)\\ 
\label{l3}
&= \frac{\mathbb{E}_\infty\left\{\ell_{\tau}\right\}}{\mathbb{E}_\infty\{\tau\}}
+\inf_{\gamma_1<\dots<\gamma_{s-1}} \sum_{i=1}^{s-1}\mathbb{P}_\gamma(\tau=\gamma_i\;|\;\mathcal{F}_{\gamma_i-1},\tau\geq \gamma_i)\\ 
\label{l4}
&\leq s\cdot\frac{\mathbb{E}_\infty\left\{\ell_{\tau}\right\}}{\mathbb{E}_\infty\{\tau\}}\ ,
\end{align}
where~\eqref{l2} is due to the result we obtained from case $s=1$,~\eqref{l3} holds since the remaining terms are independent of $\gamma_s$, and~\eqref{l4} results from applying induction.
Generally, when we have multiple change-points, for any $i\in\{1,\dots,s\}$ if we define
\begin{align}
\mathcal{L}^i_{\rm L}(\tau)\dff \mathbb{P}_\gamma(\tau=\gamma_i\;|\;\mathcal{F}_{\gamma_i-1},\tau\geq \gamma_i)\ ,
\end{align}
by following the same line of argument as the case of $s=1$ we can show that for every $i\in\{1,\dots,s\}$ we have
\begin{align}
\mathcal{L}^i_{\rm L}(\tau) \leq\frac{\mathbb{E}_\infty\{\ell_\tau\}}{\mathbb{E}_\infty\{\tau\}}\ .
\end{align}
which concludes the proof.
\end{proof}

\subsection{Shewhart Test}
\label{sec:Shewhart}


The form of Shewhart test that we adopt in this paper consists in a dynamic and sequential likelihood ratio test. Formally, at each time $t$ based on the observation $X_t$ we form the likelihood ratio value $\ell_t$ defined in~\eqref{eq:LLR}. The Shewhart test compares $\ell_t$ with a pre-specified and deterministic threshold $\alpha$ and declares a change when $\ell_t$ exceeds $\alpha$. Specifically, the stopping time of the Shewhart test is found via
\begin{align}\label{eq:stop}
\tau_{\rm s}\dff\inf\,\{t\;:\;\ell_t\geq\alpha\}\ .
\end{align}
The value of the threshold $\alpha$ is chosen such that the average run length to a false alarm is guaranteed to be not smaller than $\eta$, and can be computed from 
\begin{align}\label{eq:threshold}
\mathbb{P}_\infty(\ell_1\geq\alpha)=\eta^{-1}\ .
\end{align}

\subsection{Optimality of Shewhart Test}
\label{sec:optimal}

In this subsection we prove the exact optimality of the Shewhart test formalized in~\eqref{eq:stop} and~\eqref{eq:threshold} for both problems in~\eqref{eq:p1}~and~\eqref{eq:p2}. For this purpose, we start by proving that corresponding to any feasible\footnote{A decision rule with stopping time $\nu$ is called feasible if it satisfies the false alarm constrain, i.e., $\bbe_\infty\{\nu\}\geq\eta$.}  decision rule with the stopping time $\nu$ and the associated ratio
\begin{align}
\frac{\mathbb{E}_\infty\{\ell_\nu\}}{\mathbb{E}_\infty\{\nu\}}\ ,
\end{align}
we can construct an alternative feasible decision rule that  achieves the false alarm constraint with equality, and its stopping time, denoted by $\nu'$ achieves the same ratio, i.e., 
\begin{align}
\bbe\{\nu'\}=\eta \quad\mbox{and}\quad \frac{\mathbb{E}_\infty\{\ell_{\nu'}\}}{\mathbb{E}_\infty\{\nu'\}} =\frac{\mathbb{E}_\infty\{\ell_\nu\}}{\mathbb{E}_\infty\{\nu\}} \ .
\end{align}
This observation is formalized in the following lemma.
\begin{lemma}\label{lemma:eq}
Corresponding to any given feasible decision rule with the stopping time $\nu$ there always exists an alternative feasible decision rule that satisfies the false alarm constraint with equality, and its stopping time, denoted by $\nu'$, yields
\begin{align}\label{eq:ratio}
\frac{\mathbb{E}_\infty\{\ell_{\nu'}\}}{\mathbb{E}_\infty\{\nu'\}}=\frac{\mathbb{E}_\infty\{\ell_\nu\}}{\mathbb{E}_\infty\{\nu\}} \ .
\end{align}
\end{lemma}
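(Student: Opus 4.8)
The plan is to obtain $\nu'$ from $\nu$ by a single independent randomization that trims the expected run length down to exactly $\eta$ while leaving the ratio in~\eqref{eq:ratio} invariant. If $\bbe_\infty\{\nu\}=\eta$ there is nothing to prove, so assume $\bbe_\infty\{\nu\}>\eta$ (and, as is customary in the optimization, $\bbe_\infty\{\nu\}<\infty$). Set $\theta\dff\eta/\bbe_\infty\{\nu\}\in(0,1)$, draw $B\sim{\rm Bernoulli}(\theta)$ independently of $\X$, and let $\nu'$ be the randomized $\G_t$-stopping rule that runs $\nu$ on the event $\{B=1\}$ and, on $\{B=0\}$, declares a change \emph{before} collecting any measurement. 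This latter branch can never coincide with a change-point, so it carries zero run length and a null likelihood ratio, i.e.\ one adopts the convention $\ell_0\dff 0$ (equivalently, it is an abstaining declaration scored as a certain false alarm).

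With this construction both claimed properties follow from linearity of expectation. First, $\bbe_\infty\{\nu'\}=\theta\,\bbe_\infty\{\nu\}+(1-\theta)\cdot 0=\eta$, so $\nu'$ is feasible and attains the false-alarm constraint with equality. Second, $\bbe_\infty\{\ell_{\nu'}\}=\theta\,\bbe_\infty\{\ell_\nu\}+(1-\theta)\cdot 0=\theta\,\bbe_\infty\{\ell_\nu\}$, and dividing these two identities gives $\bbe_\infty\{\ell_{\nu'}\}/\bbe_\infty\{\nu'\}=\bbe_\infty\{\ell_\nu\}/\bbe_\infty\{\nu\}$, which is exactly~\eqref{eq:ratio}.

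I expect the main (and essentially only) delicate point to be the design of the $\{B=0\}$ branch, which is why the statement is isolated as a lemma rather than dispatched in one line. The ratio is preserved precisely because that branch contributes $0$ to \emph{both} numerator and denominator, so the common weight $\theta$ cancels. Mixing $\nu$ instead with any branch that stops at a genuine time $t\geq 1$ would add $1$ per unit of weight to the numerator (since $\bbe_\infty\{\ell_t\}=1$ for every $t$) but a different amount to the denominator, thereby pulling the ratio toward that branch's own ratio; to avoid this one would need a branch whose ratio already equals $\bbe_\infty\{\ell_\nu\}/\bbe_\infty\{\nu\}$, which is circular. Hence the argument genuinely uses that randomized rules are admissible and that a pre-observation declaration is counted as a false alarm. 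It is also worth noting that $\nu'$ is typically strictly worse than $\nu$ under $\mathcal L_{\rm P}$ and $\mathcal L_{\rm L}$ (it attains $\theta$ times their values); this causes no difficulty, since the lemma is invoked only to carry over the value of the ratio that appears in the bound of Theorem~\ref{th:bound}.
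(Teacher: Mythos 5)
Your proposal is correct and is essentially the paper's own argument in different packaging: the paper increases the probability $\pi_0$ of declaring a change at time $0$ (before any measurement) to a value $\pi_0'$ solving $(1-\pi_0')\,\bbe_\infty\{\nu\,|\,\nu>0\}=\eta$, which is exactly what your independent Bernoulli$(\theta)$ mixture with the ``declare immediately'' branch realizes, and both arguments hinge on that branch contributing zero to both $\bbe_\infty\{\ell_{\nu'}\}$ and $\bbe_\infty\{\nu'\}$. Your explicit statement of the convention $\ell_0=0$ and of the assumption $\bbe_\infty\{\nu\}<\infty$ makes precise what the paper leaves implicit, but the route is the same.
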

\begin{proof}
Define $\pi_0$ as the probability of detecting a change without taking any measurement by the given stopping rule with the stopping time $\nu$. Then, it can be readily verified that
\begin{align}
\frac{\mathbb{E}_\infty\{\ell_\nu\}}{\mathbb{E}_\infty\{\nu\}} &= \frac{(1-\pi_0)\mathbb{E}_\infty\{\ell_\nu\;|\;\nu>0\}}{(1-\pi_0)\mathbb{E}_\infty\{\nu\;|\;\nu>0\}} \\
&= \frac{\mathbb{E}_\infty\{\ell_\nu\;|\;\nu>0\}}{\mathbb{E}_\infty\{\nu\;|\;\nu>0\}} \ .
\end{align}
Now, if corresponding to the stopping time $\nu$, the false alarm constraint does not hold with equality, i.e., if
\begin{align}
\bbe_\infty\{\nu\}=(1-\pi_0)\mathbb{E}_\infty\{\nu\;|\;\nu>0\}>\eta\ ,
\end{align}
then we design an alternative decision rule that (i) at every time $\nu>0$ it is similar to the given rule, which leads to 
\begin{align}
\label{eq:equal1}\mathbb{E}_\infty\{\nu'\;|\;\nu'>0\} & =\mathbb{E}_\infty\{\nu\;|\;\nu>0\}\ , \\
\label{eq:equal2}\mathbb{E}_\infty\{\ell_{\nu'}\;|\;\nu'>0\} & = \mathbb{E}_\infty\{\ell_\nu\;|\;\nu>0\}\ ,
\end{align}
and (ii) at $\nu=0$ the initial probability of detecting a change without collecting any measurements is set to $\pi'_0>\pi_0$, where $\pi'_0$ is the unique solution to 
\begin{align}\label{eq:equal3}
\bbe_\infty\{\nu'\}=(1-\pi_0')\mathbb{E}_\infty\{\nu\;|\;\nu>0\}=\eta\ .
\end{align}
Therefore, \eqref{eq:equal1}-\eqref{eq:equal3} collectively establish that $\nu'$ is feasible and achieves the same ratio specified in~\eqref{eq:ratio}.
\end{proof}
Next, we leverage the result of Lemma~\ref{lemma:eq} and prove the following  properties for the Shewhart test:
\begin{enumerate}
\item It is a feasible test.
\item Among all feasible tests, it maximizes the upper bound on $\mathcal{L}_{\rm P}(\tau)$ and $\mathcal{L}_{\rm L}(\tau)$ established in Theorem~\ref{th:bound}.
\item The objective functions $\mathcal{L}_{\rm P}(\tau)$ and $\mathcal{L}_{\rm L}(\tau)$ meet this maximum upper bound when using the Shewhart test.
\end{enumerate}
These properties are formalized in the following lemmas.
\begin{lemma}[Feasibility of Shewhart]\label{lemma:feasible}
Shewhart test achieves the false alarm constraints of~\eqref{eq:p1} and~\eqref{eq:p2} with equality.
\end{lemma}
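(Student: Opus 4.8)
The plan is to show directly that $\bbe_\infty\{\tau_{\rm s}\}=\eta$, which is exactly the statement that the false alarm constraint $\bbe_\infty\{\tau\}\geq\eta$ in both~\eqref{eq:p1} and~\eqref{eq:p2} is met with equality by the Shewhart stopping time $\tau_{\rm s}$ defined in~\eqref{eq:stop}. The whole argument rests on one structural observation: under the nominal measure $\P_\infty$ every $X_t$ is drawn from $F_0$, and the $X_t$ are independent, so the likelihood ratios $\ell_1,\ell_2,\dots$ defined in~\eqref{eq:LLR} are i.i.d.\ under $\P_\infty$ as well, each being a fixed measurable function of a single $X_t$.

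First I would record this i.i.d.\ property and conclude that the event $\{\ell_t\geq\alpha\}$ has the same probability $p\dff\P_\infty(\ell_1\geq\alpha)$ for every $t$, independently across $t$. Consequently the first crossing time $\tau_{\rm s}=\inf\{t:\ell_t\geq\alpha\}$ is a geometric random variable under $\P_\infty$ with parameter $p$, i.e.\ $\P_\infty(\tau_{\rm s}=k)=(1-p)^{k-1}p$ for $k\in\mathbb{N}$. Then I would simply compute $\bbe_\infty\{\tau_{\rm s}\}=1/p$.

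The final step is to invoke the definition of the threshold $\alpha$: by~\eqref{eq:threshold}, $\alpha$ is chosen precisely so that $p=\P_\infty(\ell_1\geq\alpha)=\eta^{-1}$, whence $\bbe_\infty\{\tau_{\rm s}\}=1/p=\eta$. This gives the claimed equality. (If $F_0$ is such that $\P_\infty(\ell_1\geq\alpha)$ has jumps and no deterministic $\alpha$ attains $\eta^{-1}$ exactly, one would allow an auxiliary randomization at the boundary value of $\alpha$ so that~\eqref{eq:threshold} still holds with equality; with that convention the computation above is unchanged.)

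I do not anticipate a genuine obstacle here: the only thing that needs care is making explicit that the likelihood ratios inherit the i.i.d.\ structure from the nominal data model, and that the Shewhart rule depends on the data only through the current $\ell_t$, so no memory or filtration subtleties enter. Everything else is the elementary mean of a geometric distribution combined with the calibration~\eqref{eq:threshold}.
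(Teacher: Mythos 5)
Your proof is correct and follows essentially the same route as the paper: both arguments reduce to observing that under $\P_\infty$ the likelihood ratios are i.i.d., so $\tau_{\rm s}$ is geometric with parameter $\P_\infty(\ell_1\geq\alpha)=\eta^{-1}$ and hence $\bbe_\infty\{\tau_{\rm s}\}=\eta$ (the paper computes the mean via the tail-sum $\sum_t\P_\infty(\tau_{\rm s}\geq t)$, which is the same calculation). Your added remark about randomizing at the threshold when \eqref{eq:threshold} cannot be met exactly is a sensible refinement the paper leaves implicit.
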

\begin{proof}
For the Shewhart test we have
\begin{align}
\label{feas1}
\mathbb{E}_\infty\{\tau_{\rm s}\} &= \sum_{t=1}^\infty \mathbb{P}_\infty(\tau_s\geq t)\\
\label{feas2}
& = \sum_{t=1}^\infty [1-\mathbb{P}_\infty(\ell_1\geq\alpha)]^{t-1}\\
\label{feas3}
&= \frac{1}{\mathbb{P}_\infty(\ell_1\geq\alpha)}\\
\label{feas4}
&= \eta\ ,
\end{align}
where~\eqref{feas2} is due to the fact that at each time we stop we probability $\mathbb{P}_\infty(\ell_1<\alpha)=1-\mathbb{P}_\infty(\ell_1\geq\alpha)$,~\eqref{feas3} is the result of the infinite sum, and~\eqref{feas4} holds because of~\eqref{eq:threshold}.
\end{proof}
\begin{lemma}\label{lemma:upper_bound}
Shewhart test is the solution to 
\begin{align}\label{eq:p3}
\sup_{\tau\;:\;\bbe_\infty\{\tau\}=\eta} \frac{\bbe_\infty\{\ell_\tau\}}{\bbe_{\infty}\{\tau\}}\ .
\end{align}
\end{lemma}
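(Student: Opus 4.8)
The plan rests on one elementary pathwise inequality together with Wald's identity. Under $\mathbb{P}_\infty$ no change ever occurs, so $\S=\emptyset$, $r(t)=0$, and the coarse filtration $\G_t$ coincides with $\F_t$; thus in~\eqref{eq:p3} the supremum is effectively over ordinary $\F_t$-stopping times, and because the constraint pins $\bbe_\infty\{\tau\}=\eta$, the problem is equivalent to maximizing $\bbe_\infty\{\ell_\tau\}$ over $\{\tau:\bbe_\infty\{\tau\}=\eta\}$. The only distributional facts I would invoke are that under $\mathbb{P}_\infty$ the ratios $\ell_1,\ell_2,\dots$ are i.i.d.\ and that $\bbe_\infty\{\ell_t\}=\int f_1=1$, whence in particular $\bbe_\infty\{(\ell_1-\alpha)^+\}\le\bbe_\infty\{\ell_1\}=1<\infty$, with $\alpha$ the Shewhart threshold fixed by~\eqref{eq:threshold}.

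For the upper bound I would start from the trivial inequality $\ell_t=\alpha+(\ell_t-\alpha)\le\alpha+(\ell_t-\alpha)^+\le\alpha+\sum_{s=1}^{t}(\ell_s-\alpha)^+$, valid for all $t\ge1$. Evaluating it at $t=\tau$, taking $\bbe_\infty$, and applying Wald's identity to the i.i.d.\ nonnegative summands $(\ell_s-\alpha)^+$ (legitimate since $\bbe_\infty\{\tau\}=\eta<\infty$) yields
\[
\bbe_\infty\{\ell_\tau\}\;\le\;\alpha+\bbe_\infty\{\tau\}\,\bbe_\infty\{(\ell_1-\alpha)^+\}\;=\;\alpha+\eta\,\bbe_\infty\{(\ell_1-\alpha)^+\}\ ,
\]
so that $\bbe_\infty\{\ell_\tau\}/\bbe_\infty\{\tau\}\le\alpha/\eta+\bbe_\infty\{(\ell_1-\alpha)^+\}$ for every feasible $\tau$.

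It then remains to verify that the Shewhart rule $\tau_{\rm s}$ of~\eqref{eq:stop} meets this bound. On its sample paths $\ell_s<\alpha$ for $s<\tau_{\rm s}$ while $\ell_{\tau_{\rm s}}\ge\alpha$, so the chain of inequalities above collapses to the \emph{equality} $\ell_{\tau_{\rm s}}=\alpha+\sum_{s=1}^{\tau_{\rm s}}(\ell_s-\alpha)^+$; taking expectations, using Wald's identity again and $\bbe_\infty\{\tau_{\rm s}\}=\eta$ from Lemma~\ref{lemma:feasible}, gives $\bbe_\infty\{\ell_{\tau_{\rm s}}\}=\alpha+\eta\,\bbe_\infty\{(\ell_1-\alpha)^+\}$. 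Hence $\tau_{\rm s}$ attains the upper bound, i.e.\ it solves~\eqref{eq:p3}; the optimal value, upon substituting $\mathbb{P}_\infty(\ell_1\ge\alpha)=\eta^{-1}$, simplifies to $\bbe_\infty\{\ell_1\mathds{1}_{\{\ell_1\ge\alpha\}}\}$.

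I expect the only delicate choice to be that of the comparison variable $(\ell_t-\alpha)^+$: it is tuned so that its increments are nonnegative and integrable, making Wald's identity immediate, and so that the bounding inequality degenerates to an identity precisely on the stopping set $\{\ell_t\ge\alpha\}$, which is exactly what forces the Shewhart test to be tight. The remaining points are routine: the validity of Wald's identity follows from $\bbe_\infty\{\tau\}<\infty$ and integrability of $\ell_1$, and a rule that stops at $t=0$ with positive probability only deflates the ratio (as noted in the paragraph preceding Lemma~\ref{lemma:eq}), so one may restrict to $\tau\ge1$ without loss of generality.
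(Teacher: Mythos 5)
Your proof is correct, but it takes a genuinely different route from the paper's. The paper argues via Lagrangian relaxation: it forms $\bbe_\infty\{\ell_\tau-\lambda\tau\}$ and invokes an optimal-stopping/backward-induction argument to conclude that the optimal rule thresholds $\ell_t$ against a constant; that argument is only sketched (the infinite-horizon backward induction is asserted rather than carried out, and the multiplier $\lambda$ is never explicitly tied back to the constraint $\bbe_\infty\{\tau\}=\eta$). You instead give a direct verification argument: the pathwise inequality $\ell_t\le\alpha+\sum_{s=1}^{t}(\ell_s-\alpha)^+$ together with Wald's identity yields the explicit bound $\bbe_\infty\{\ell_\tau\}/\bbe_\infty\{\tau\}\le\alpha/\eta+\bbe_\infty\{(\ell_1-\alpha)^+\}=\bbe_\infty\{\ell_1\mathds{1}_{\{\ell_1\ge\alpha\}}\}$ for every rule with $\bbe_\infty\{\tau\}=\eta$, and the chain of inequalities collapses to an identity precisely on Shewhart sample paths, so $\tau_{\rm s}$ attains the bound by Lemma~\ref{lemma:feasible}. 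This is essentially Moustakides' verification technique for the Shewhart test; it buys a fully rigorous, self-contained proof with an explicit optimal value, at the cost of having to guess the comparison variable $(\ell_t-\alpha)^+$ in advance rather than deriving the threshold structure from dynamic programming. Two minor remarks: your aside that initial randomization at $t=0$ ``deflates the ratio'' is not quite what Lemma~\ref{lemma:eq} asserts (there the ratio is preserved), but this is harmless since your unconditional bound $\bbe_\infty\{\ell_\tau\}\le\alpha\,\mathbb{P}_\infty(\tau>0)+\eta\,\bbe_\infty\{(\ell_1-\alpha)^+\}$ already covers such rules; and the appeal to Wald's identity is legitimate because under $\mathbb{P}_\infty$ we have $\G_t=\F_t$, the summands $(\ell_s-\alpha)^+$ are i.i.d.\ with mean at most $\bbe_\infty\{\ell_1\}=1$, and $\bbe_\infty\{\tau\}=\eta<\infty$.
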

\begin{proof}
The Lagrangian  corresponding to the constrained problem in~\eqref{eq:p3} is
\begin{align}
\mathcal{L}(\tau) &\dff \bbe_\infty\{\ell_\tau-\lambda\tau\}\ .
\end{align}
We show that the Shewhart test is the solution to this unconstrained problem. To this end, leveraging the standard stopping rule techniques~\cite{ShiryaevOptimal} we define 
\begin{align}
G_t(\F_t)\dff\max\big\{\ell_t\, ,\, -\lambda+\bbe_\infty\{G_{t+1}(\F_{t+1})\,|\,\F_t\}\big\}
\end{align}
as the maximal utility function at each time $t$, where $\ell_t$ is the utility if we stop at time $t$, and $-\lambda+\bbe_\infty\{G_{t+1}(\F_{t+1})|\F_t\}$ is the return of taking one more measurement from the sequence. The sampling process stops as soon as the  
\begin{align}
\ell_t \geq -\lambda+\bbe_\infty\{G_{t+1}(\F_{t+1})\,|\,\F_t\}\ .
\end{align}
It can be readily verified through backward induction that the maximal utility function depends on $\F_t$ only through $\ell_t$. Furthermore, backward induction can be used to show that
\begin{align}
G_t(\ell_t)&\dff\max\big\{\ell_t\, ,\, -\lambda+\bbe_\infty\{\ell_{t+1}\}\big\}\\
&=\max\big\{\ell_t\, ,\, C\big\}\ .
\end{align}
where $C$ is a constant. Hence, the optimal solution reduces to comparing the likelihood ratio of the current measurement with a constant, which is the Shewhart test.
\end{proof}
\begin{theorem}\label{lemma:Sh}
The Shewhart test with the stopping time and threshold given in~\eqref{eq:stop} and~\eqref{eq:threshold}, respectively, optimizes~\eqref{eq:p3} and, therefore, is the optimal solution to~\eqref{eq:p1} and~\eqref{eq:p2}, i.e., 
\begin{align}
\L_{\rm L}(\tau_s)= \sup_{\tau\;:\;\bbe_\infty\{\tau\}\geq\eta} \L_{\rm L}(\tau)\ .
\end{align}
\end{theorem}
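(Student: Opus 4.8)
The plan is to combine the three ingredients already assembled: the upper bound of Theorem~\ref{th:bound}, the feasibility of the Shewhart test (Lemma~\ref{lemma:feasible}), and its optimality for the ratio problem~\eqref{eq:p3} (Lemma~\ref{lemma:upper_bound}), with the reduction afforded by Lemma~\ref{lemma:eq}. First I would argue the chain of inequalities in the ``easy'' direction. For any feasible $\tau$ with $\bbe_\infty\{\tau\}\geq\eta$, Theorem~\ref{th:bound} gives $\L_{\rm L}(\tau)\leq s\cdot\bbe_\infty\{\ell_\tau\}/\bbe_\infty\{\tau\}$. By Lemma~\ref{lemma:eq} one may replace $\tau$ by a feasible $\tau'$ meeting the constraint with equality and having the same ratio, so the supremum of $\bbe_\infty\{\ell_\tau\}/\bbe_\infty\{\tau\}$ over feasible $\tau$ equals the supremum over the equality-constrained set in~\eqref{eq:p3}; by Lemma~\ref{lemma:upper_bound} this supremum is attained by $\tau_{\rm s}$. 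Hence for every feasible $\tau$,
\begin{align}
\L_{\rm L}(\tau)\;\leq\; s\cdot\frac{\bbe_\infty\{\ell_{\tau}\}}{\bbe_\infty\{\tau\}}\;\leq\; s\cdot\frac{\bbe_\infty\{\ell_{\tau_{\rm s}}\}}{\bbe_\infty\{\tau_{\rm s}\}}\ .
\end{align}
Taking the supremum over feasible $\tau$ on the left gives one of the two required inequalities, once we know the right-hand side equals $\L_{\rm L}(\tau_{\rm s})$.

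The crux, therefore, is the reverse direction: showing that the Shewhart test \emph{achieves} the bound, i.e. $\L_{\rm L}(\tau_{\rm s}) = s\cdot\bbe_\infty\{\ell_{\tau_{\rm s}}\}/\bbe_\infty\{\tau_{\rm s}\}$, and moreover that it is feasible (Lemma~\ref{lemma:feasible} already supplies $\bbe_\infty\{\tau_{\rm s}\}=\eta\geq\eta$). To do this I would compute $\L_{\rm L}(\tau_{\rm s})$ directly from the definition~\eqref{eq:Lorden_criterion}. The key structural feature is that under the Shewhart rule the decision at time $t$ depends only on $X_t$, and $\tau_{\rm s}$ is $\G_t$-measurable with the stopping event at any candidate change-point $\gamma_i$ reducing to $\{\ell_{\gamma_i}\geq\alpha\}$ \emph{together with} the event that no earlier measurement since the last change-point triggered a stop. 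Conditioned on $\F_{\gamma_i-1}$ and on $\{\tau_{\rm s}\geq\gamma_i\}$, the random variable $X_{\gamma_i}$ is the first sample drawn from $F_1$ in that transient block, so
\begin{align}
\mathbb{P}_\gamma\big(\tau_{\rm s}=\gamma_i \med \F_{\gamma_i-1},\tau_{\rm s}\geq\gamma_i\big)
= \mathbb{P}_\gamma\big(\ell_{\gamma_i}\geq\alpha\big)
= \bbe_\infty\{\ell_1\,\mathds{1}_{\{\ell_1\geq\alpha\}}\}\ ,
\end{align}
where the last step is a change of measure $\mathbb{P}_\gamma\to\mathbb{P}_\infty$ on the single coordinate $\gamma_i$. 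Crucially this quantity is \emph{constant} in $\gamma_i$ and in the realization of $\F_{\gamma_i-1}$, so the \emph{essinf} is vacuous and the sum over $i=1,\dots,s$ simply multiplies by $s$, while the infimum over $\gamma$ is also vacuous; hence $\L_{\rm L}(\tau_{\rm s}) = s\cdot\bbe_\infty\{\ell_1\mathds{1}_{\{\ell_1\geq\alpha\}}\}$.

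Finally I would verify that $s\cdot\bbe_\infty\{\ell_1\mathds{1}_{\{\ell_1\geq\alpha\}}\}$ equals $s\cdot\bbe_\infty\{\ell_{\tau_{\rm s}}\}/\bbe_\infty\{\tau_{\rm s}\}$, which closes the loop. Using $\bbe_\infty\{\tau_{\rm s}\}=\eta=1/\mathbb{P}_\infty(\ell_1\geq\alpha)$ from Lemma~\ref{lemma:feasible} and a renewal/geometric argument on the i.i.d.\ sequence under $\P_\infty$, $\bbe_\infty\{\ell_{\tau_{\rm s}}\} = \sum_{t\geq 1}\mathbb{P}_\infty(\ell_1<\alpha)^{t-1}\bbe_\infty\{\ell_t\mathds{1}_{\{\ell_t\geq\alpha\}}\} = \bbe_\infty\{\ell_1\mathds{1}_{\{\ell_1\geq\alpha\}}\}/\mathbb{P}_\infty(\ell_1\geq\alpha)$, so the ratio is exactly $\bbe_\infty\{\ell_1\mathds{1}_{\{\ell_1\geq\alpha\}}\}$. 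Combining with the displayed inequality chain, $\L_{\rm L}(\tau_{\rm s})$ both upper-bounds and equals $\sup_{\tau:\bbe_\infty\{\tau\}\geq\eta}\L_{\rm L}(\tau)$, giving the claim; the identical computation with the \emph{essinf} removed yields the statement for $\L_{\rm P}$ as well. The main obstacle I anticipate is the careful handling of the conditioning in the Shewhart calculation — in particular justifying that ``no prior stop in the current transient block'' does not interfere, which rests on the fact that the block structure and the block length $T$ are irrelevant to the \emph{first} post-change sample's behavior, and that $\tau_{\rm s}$ being $\G_t$-measurable means the relevant history is reset at each change-point.
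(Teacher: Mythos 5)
Your proposal follows the same overall skeleton as the paper's proof: chain Theorem~\ref{th:bound}'s upper bound with Lemma~\ref{lemma:eq}'s reduction to equality-constrained rules and Lemma~\ref{lemma:upper_bound}'s ratio optimality, then close the loop by showing the Shewhart test attains its own bound. The one substantive difference is in that last step. The paper's proof asserts the equality $\eta^{-1}\sup_{\tau}\bbe_\infty\{\ell_{\tau_{\rm s}}\}=\L_{\rm L}(\tau_{\rm s})$ in~\eqref{eq:up3} by citing Lemma~\ref{lemma:upper_bound}, which only concerns the ratio problem and does not by itself establish achievability; you instead compute $\L_{\rm L}(\tau_{\rm s})$ directly from the definition, obtaining
\begin{align}
\mathbb{P}_\gamma\big(\tau_{\rm s}=\gamma_i \med \F_{\gamma_i-1},\tau_{\rm s}\geq\gamma_i\big)=\bbe_\infty\{\ell_1\mathds{1}_{\{\ell_1\geq\alpha\}}\}
\end{align}
independently of $\gamma_i$ and of the history (so the essinf and the infimum over $\gamma$ are vacuous), and match it to $s\cdot\bbe_\infty\{\ell_{\tau_{\rm s}}\}/\bbe_\infty\{\tau_{\rm s}\}$ via the geometric/renewal identity. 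This explicit achievability computation is a genuine improvement: it supplies the step the paper glosses over, and it also keeps the factor $s$ consistent on both sides of the inequality chain (the paper's display~\eqref{eq:up1}--\eqref{eq:up3} silently drops the $s$ from Theorem~\ref{th:bound}'s bound, which only works because the same factor reappears in $\L_{\rm L}(\tau_{\rm s})$ --- exactly the cancellation your calculation makes visible). Your argument is correct as stated.
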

\begin{proof}
From Lemma~\ref{lemma:upper_bound} for any feasible stopping time $\tau$ that meets the false alarm constraint with equality we have
\begin{align}\label{eq:sha}
\mathcal{L}(\tau)=\bbe_\infty\{\ell_{\tau}-\lambda\tau\}\leq\bbe_\infty\{\ell_{\tau_s}-\lambda\tau_s\}= \mathcal{L}(\tau_s)\ ,
\end{align}
and therefore, $\bbe_\infty\{\ell_{\tau}\}\leq\bbe_\infty\{\ell_{\tau_s}\}$. Now, we have
\begin{align}
\label{eq:up1}
\sup_{\tau:\bbe_\infty\{\tau\}\geq\eta}\L_{\rm L}(\tau) &\leq \eta^{-1}\sup_{\tau:\bbe_\infty\{\tau\}=\eta}\bbe_\infty\{\ell_{\tau}\}\\
\label{eq:up2}
&\leq \eta^{-1}\sup_{\tau:\bbe_\infty\{\tau\}=\eta}\bbe_\infty\{\ell_{\tau_s}\}\\
\label{eq:up3}
&= \L_{\rm L}(\tau_s)\ ,
\end{align}
where~\eqref{eq:up1} results from replacing $\L_{\rm L}(\tau)$ with its upper bound,~\eqref{eq:up2} holds due to~\eqref{eq:sha}, and~\eqref{eq:up3} is due to Lemma~\ref{lemma:upper_bound}. Since the upper bound on the objective is achieved, the proof is concluded.
\end{proof}
%
The popularity of Shewhart test is mostly due to its simple implementation. At each time $t$ we take a new measurement from the sequence, form its likelihood ratio, and compare the likelihood ratio with a fixed pre-specified upper threshold. We stop the process and declare a change the first time the likelihood ratio exceeds the threshold. 

\section{Numerical Results}
\label{sec:sim}

\begin{figure}
\centering
\includegraphics[width=2.8in]{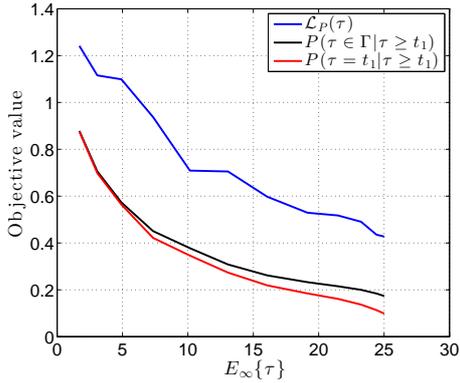}
\caption{The probability of detecting the first or any change-point.}
\label{fig:obj}
\end{figure}

\begin{figure}
\centering
\includegraphics[width=2.8in]{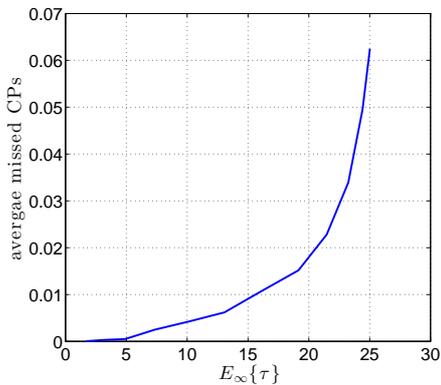}
\caption{The average number of missed change-points before detection.}
\label{fig:miss}
\end{figure}

\begin{figure*}
\centering
\includegraphics[width=0.99\textwidth]{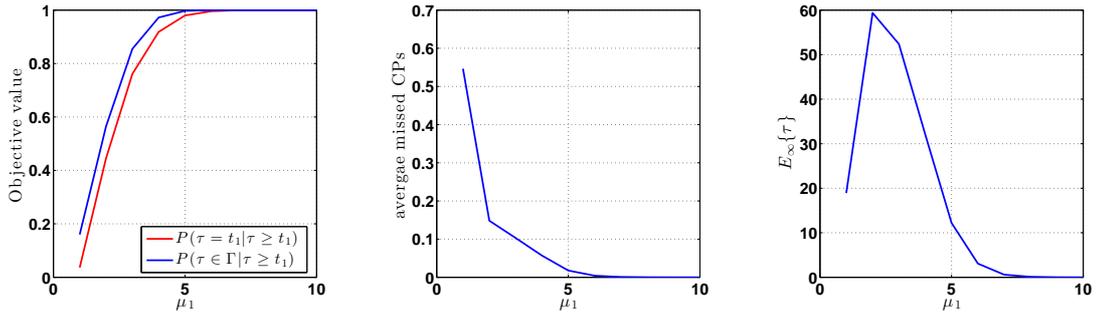}
\caption{Performance of the Shewhart test for the modified Pollak criterion.}
\label{fig:KL}
\end{figure*}

In this section we numerically evaluate the performance of the Shewhart test corresponding to the Pollak-like criterion. Specifically, we are interested in assessing the impact that our choice of the selected agility criterion has on the test performance. Based on our Pollak-like criterion, we can afford to miss some of the change-points in favor of making more confident decisions about the onset of a change. In order to quantify how this specific objective function affects the test performance, for the Shewhart test we compare the probability of detecting the first change-point with the probability of detecting any change-point. We, also, provide the average number of missed change-points by the Shewhart test.

To this end, we consider a sequence of $10^5$ random variables, where the nominal and alternative distributions are unit-variance Gaussian distributions with mean values $0$ and $1$, respectively. There exist $1000$ change-points in the sequence, each with the duration $T=1$. Figure~\ref{fig:obj} compares the conditional probability of detecting the first change-point with that of detecting any change-point. It is observed that when we have a more stringent constraint on the false alarm rates, i.e., the average run length to a false alarm increases, the detection probability decreases since we want to raise fewer false alarms. Also, the ratio gap between these two objective function becomes more significant. This is due to the fact that in our objective function, we can afford to wait for a more reliable decision about the occurrence of a change-point. Figure~\ref{fig:miss} illustrates the average number of missed change-points in our setting. It is observed that for larger average run length to a false alarm we miss more change-points in order to detect one of them more reliably.

In order to evaluate the effect of similarity level of the pre-change and post-change distributions, in Fig.~\ref{fig:KL} we repeat the simulation for various values of the mean for the post-change distribution. It is observed that by increasing the mean, which is equivalent to more less similarity to the pre-change distribution, the average number of missed change-points decreases and the Shewhart test detects the first change-point more reliably.

\section{Conclusion}
\label{sec:conclusion}

We have analyzed the problem of quickest search for change-points when the changes are not persistent. We have considered a setting in which a sequence of random variables might undergo  multiple change-points and after each change-point it returns to the nominal distribution. Both the pre-change and post-change distributions are known and the objective is to identify one of these change-points in real-time, i.e., by observing the first measurement generated according to the post-change distribution, while controlling the false alarm rate in parallel. To this end, we have considered a probability maximizing approach in a minimax setting. We have shown that the Shewhart test, which is a likelihood ratio test based on the current observed measurement, is exactly optimal.

\bibliographystyle{IEEEtran}
\bibliography{ISIT18-1}

\end{document}